\documentclass[sigconf, nonacm]{acmart}
\makeatletter

\newcommand{\ACMemailhref}[2]{%
  \mbox{\href{#1}{#2}}%
}

\patchcmd{\@mkauthors@iii}
  {\href}
  {\ACMemailhref}
  {}{}

\patchcmd{\@mkauthors@iii}
  {\href}
  {\ACMemailhref}
  {}{}

\makeatother

\usepackage{amsmath}
\usepackage{mathtools}
\usepackage{amsthm}
\usepackage{enumitem}
\usepackage{dsfont}
\usepackage{array}
\usepackage{subcaption}

\usepackage[capitalize,noabbrev]{cleveref}

\theoremstyle{plain}
\newtheorem{theorem}{Theorem}[section]

\newtheorem{lemma}[theorem]{Lemma}

\theoremstyle{definition}

\theoremstyle{remark}

\begin{document}
\newcommand{\defeq}{\vcentcolon=}  
\newcommand{\reals}{{\mbox{\bf R}}}
\newcommand{\ie}{{\it i.e.}}
\newcommand{\Rn}{\mathcal{R}_{+}^n}
\newcommand{\D}{\Delta}

\newcommand{\la}{\lambda}
\newcommand{\de}{\delta}
\newcommand{\ka}{\kappa}

\newcommand{\al}{\alpha}

\newcommand{\tka}{\tilde{\kappa}}
\newcommand{\tKa}{\tilde{\Kappa}}

\newcommand{\CPM}{Constant Product Market}

\newcommand{\IL}{\text{IL}}

\newcommand{\fee}{\tau}
\newcommand{\pen}{\phi}
\newcommand{\win}{\rho}
\newcommand{\imp}{r}

\newcommand{\E}{\mathbb{E}}
\newcommand{\dd}{\,d}

\newcommand{\saife}{\texttt{SAiFE\_gym}\xspace}

\newcommand{\counts}{\mathcal{N}}

\NewDocumentCommand{\z}{e{^}}{%
  Z\IfValueT{#1}{^{(#1)}}%
}

\newcommand{\clmm}{CPM with CL\xspace}
\newcommand{\clmms}{CPMs with CL\xspace}

%%
%% The "title" command has an optional parameter,
%% allowing the author to define a "short title" to be used in page headers.
\title{%The Role of 
%Risk in 
Concentrated Liquidity Provision: a Reinforcement Learning Perspective}

% \title{Concentrated Liquidity Market Making through Reinforcement Learning}

%%
%% The "author" command and its associated commands are used to define
%% the authors and their affiliations.
%% Of note is the shared affiliation of the first two authors, and the
%% "authornote" and "authornotemark" commands
%% used to denote shared contribution to the research.
% \author{Georgios Chionas}
% \authornote{Both authors contributed equally to this research.}
% \email{g.chionas@liverpool.ac.uk}
% \orcid{1234-5678-9012}
% \author{G.K.M. Tobin}
% \correspondingauthor
% \authornotemark[1]
% \email{webmaster@marysville-ohio.com}
% \affiliation{%
%   \institution{Institute for Clarity in Documentation}
%   \city{Dublin}
%   \state{Ohio}
%   \country{USA}
% }

\author{Georgios Chionas}
\affiliation{%
    % \institution{School of Computer Science and Informatics \\ University of Liverpool}
    \institution{University of Liverpool}
    \city{Liverpool}
    \country{UK}}
\authornote{Authors contributed equally to the paper.}
\email{g.chionas@liverpool.ac.uk}

\author{Charalampos Kleitsikas}
\affiliation{%
  % \institution{Department of Informatics \\ King's College London}
  \institution{King's College London}
  \city{London}
  \country{UK}}
\authornotemark[1]
\email{charalampos.kleitsikas@kcl.ac.uk}

\author{Stefanos Leonardos}
\affiliation{%
  % \institution{Department of Informatics \\ King's College London}
  \institution{King's College London}
  \city{London}
  \country{UK}
}
\email{stefanos.leonardos@kcl.ac.uk}

\author{Leandro S\'anchez-Betancourt}
\affiliation{%
%  \institution{Mathematical Institute \& Oxford-Man Institute of
% Quantitative Finance \\ University of Oxford}
\institution{University of Oxford}
 \city{Oxford}
 \country{UK}}
\email{leandro.sanchezbetancourt@maths.ox.ac.uk}

\author{Carmine Ventre}
\affiliation{%
   \institution{King's College London}
  \city{London}
  \country{UK}
}
\email{carmine.ventre@kcl.ac.uk}

%%
%% By default, the full list of authors will be used in the page
%% headers. Often, this list is too long, and will overlap
%% other information printed in the page headers. This command allows
%% the author to define a more concise list
%% of authors' names for this purpose.
\renewcommand{\shortauthors}{Chionas et al.}

%%
%% The abstract is a short summary of the work to be presented in the
%% article.
\begin{abstract}
Automated market makers (AMMs) are a cornerstone of decentralised finance (DeFi).
Constant product markets with concentrated liquidity, such as UniswapV3, are now a well-established design. In these markets, liquidity providers (LPs) face a sequential decision problem: they must decide when to rebalance their positions and which price ranges to allocate capital to as market conditions evolve. We formulate dynamic liquidity provision as a stochastic impulse control problem and use reinforcement learning (RL) to solve it, focusing on providing interpretable solutions. We show that learned policies exhibit rich state-dependent behaviour, allocating liquidity according to mispricing, rebalancing costs, uncertainty, inventory exposure, and heterogeneous risk preferences. These behaviours help compress the left tail of the Profit and Loss (PnL) distribution and avoid catastrophic outcomes under high uncertainty. 
%
%Within our framework, we also illustrate the liquidity concentration observed around prices in real DeFi pools as the superposition of learned policies with heterogeneous risk profiles. 
Finally, we benchmark the RL agents against baseline and sophisticated agents from the AMM microstructure literature and analyse their performance.
\end{abstract}
\maketitle
\section{Introduction}
Decentralised finance (DeFi) has introduced financial exchanges in which trading is executed by smart contracts. The most prominent examples are automated market makers (AMMs), popularised by Uniswap. Most AMMs are implemented as constant product markets (CPMs), a special class of constant function market makers \cite{Angeris20CFMMs}, where prices are determined algorithmically by an invariant function and the reserves locked in the liquidity pool. In these markets, liquidity providers (LPs) deposit capital into a pool and earn fees from users trading against it. While LPs were largely passive in designs such as UniswapV2, the introduction of concentrated liquidity (CL) in UniswapV3 significantly expanded their action space, allowing them to allocate capital to specific price ranges \cite{Adams21v3}.

CL changed the economics of liquidity provision. By concentrating capital around the current price, LPs can earn more trading fees, but narrower positions stop earning fees once the price exits the chosen interval. Hence, LPs face a trade-off between wide positions, which cover more possible prices but earn lower fees, and narrow positions, which are more profitable while active but expose the LP to greater concentration risk. As prices evolve, LPs can dynamically move their capital and create new positions, but doing so incurs blockchain frictions, such as gas fees, and may also require additional trades due to changes in the token composition of the new position. These costs must therefore be weighed against the expected benefits of rebalancing.
Thus, we formulate optimal liquidity provision as a stochastic impulse control problem, which includes stochastic order arrivals, price dynamics, and gas costs. Since Hamilton-Jacobi-Bellman-Quasi-Variational Inequality (HJBQVI)-style methods are intractable in such realistic high-dimensional settings, %We therefore 
we employ reinforcement learning (RL) %techniques 
to study dynamic liquidity provision strategies. Our objective is to provide a tractable computational representation of state-dependent policies in settings where closed-form impulse-control solutions are unavailable.

\vspace{5pt}
\noindent 
%\section
\textbf{Our contribution.} We make the following contributions.
\begin{itemize}[leftmargin=*, nosep]

\item We show that learned policies generate rich and interpretable liquidity provision strategies: agents adapt their rebalancing frequency to blockchain frictions such as gas costs, while choosing position width and asymmetry in response to their risk profile, inventory holdings, existing deployed position, and the degree and direction of arbitrage pressure in the market.

\begin{comment}
\item We show that the superposition of learned policies with heterogeneous risk profiles generates a bell-shaped liquidity distribution around the pool price, providing a model-based explanation for empirical patterns observed in real DeFi pools.
\end{comment}

\item We demonstrate a trade-off between narrow and flexible learned strategies. Narrow-position agents achieve higher fee income in risk-neutral settings but incur higher Impermanent Loss (IL) and gas costs due to more frequent rebalancing, whereas wider-position agents require less active risk management when inventory risk is considered.
\end{itemize}

\vspace{5pt}
\noindent 
%\section
\textbf{Related Work:} 
%\label{sec:related-work}
%
Academic work on AMMs dates back to \cite{Othman2010automated}, but AMMs saw widespread adoption more recently as a computationally cheaper alternative to limit order books in blockchain-based exchanges. Their use in decentralised exchanges was first proposed in \cite{buterin2016prediction,lu2017gnosisdex}, the latter introducing a CPM market maker. This mechanism was first analysed by \cite{angeris2021uniswap}, who identified conditions under which such markets can closely track the price of an external reference market. Subsequently, \cite{Angeris20CFMMs} studied the broader class of constant function market makers (CFMMs) and showed that they can incentivise participants to reveal external market prices. %Despite their dominance, s
Simple AMMs such as UniswapV2 suffer from capital inefficiency. Newer protocols such as UniswapV3 \cite{Adams21v3} address this issue through CL, which allows LPs to specify the price ranges over which their liquidity is active and gives them more granular control over fee generation.
Our work contributes to the work on %emerging area of exploring 
profitable strategies for LPs in CPMs with CL. In fact, empirical studies such as \cite{loesch2021impermanentlossuniswapv3, Di25deviations, milionis2024LVR} show that, currently, % even in the presence of fees, 
LPs are on average incurring a loss. \cite{Heimbach22risksv3} study the risk-return trade-off faced by LPs in UniswapV3, while \cite{Fan22differential} analyse optimal static liquidity provision and show how contract design choices, affect LP profits and trader gas costs. More recently, \cite{Cartea2023predictable} decompose LP losses in concentrated-liquidity CPMs into convexity costs, arising from the convexity of the pool invariant, and opportunity costs, arising from locking assets in the pool instead of investing them in a risk-free account. 
For dynamic liquidity provision, \cite{Cartea2024defi} derive a closed-form solution for frictionless LP allocations, abstracting gas costs and latency, \cite{Fan23StrategicLP} use approximation techniques to obtain profitable strategies, while \cite{Cartea23AMMDesigns} introduce a broader family of AMMs, in which LPs participate in price discovery. RL works for CL such as \cite{xu2025RL, zhang2023RL} employ training in historical data, while \cite{Arcifa2025RL} uses a hybrid setup. Contributing to these efforts, our work focuses primarily on a model-based approach that replicates the full CL mechanics instead, allowing for state-dependent and interpretable analyses in multiple economic regimes.

\section{Preliminaries}
\label{sec:prelims}
%%%%%%%%%%%%%%%%%%%%%%%%%%%%%%%%%%%%%%%%%%%%%%%%%%%%%%

\textbf{Constant Product Markets.}
Let $X$ be a numéraire asset (e.g. USDC) and $Y$ be a risky asset (e.g. ETH). 
A liquidity pool for the pair of assets $X,Y$ defines the marginal exchange rate of asset $Y$ in units of asset $X$, denoted by $Z$. 
The state of a CPM is described by the invariant function $f(q^X,q^Y) = q^X q^Y =\kappa^2$, where $q^X$ and $q^Y$ are the reserves of assets $X$ and $Y$ and $\kappa$ is the liquidity parameter (depth) of the pool, a measure of its available liquidity. The invariant function $f$ connects the state of the CPM before and after a trade is executed. The marginal exchange rate, $Z$, in a CPM is defined as $Z:=q^X/q^Y$. In other words, the marginal exchange rate $Z$ is the price quoted by the CPM at its current state for an infinitesimal trade.
Moreover, the CPM charges a fee, $\fee$, proportional to the size of the trade and thus, $1- \fee$ is the net percentage of the trade after the trading fee.

% Consider now a \emph{Liquidity Taker} (LT) trade of size $y$. Let $\tilde{Z}$ be the average execution trade, that is, $\tilde{Z}$ is the output tokens the LT receives by executing a trade of size $y$. The average execution trade $\tilde{Z}$ depends on the trade size $y$, the reserves $q^X, q^Y$ and the charged fee $\fee$.\footnote{
% Note that, for any positive trade size $y$,  the average execution rate $\tilde{Z}(y)$ is always worse than the marginal exchange rate $Z$.}
% In particular, when the LT buys a quantity $y$ of asset $Y$, she pays an amount of $x = y \cdot \tilde{Z}(y) $, such that 
% %\begin{equation}
% %
% %\label{eq:CPM-buy}
%     $f(q^X + (1-\fee) x, q^Y-y) = \kappa^2 \implies  \tilde{Z}(y) =  \frac{1}{1-\fee} \left(\frac{q^X}{q^Y-y} \right)$.
% %\end{equation}
% Similarly, when the LT sells a quantity $y$ of asset $Y$, she receives $x=y \cdot \tilde{Z}(-y)$ of asset $X$, such that
% %\begin{equation}
% %
% %\label{eq:CPM-sell}
%     $f(q^X - x, q^Y+ (1-\fee) y) = \kappa^2 \implies  \tilde{Z}(-y) = \frac{(1-\fee) q^X}{q^Y+ (1-\fee) y}$.
% %\end{equation}
Consider now a liquidity taker's (LT) trade of size $\Delta x$ (i.e., the LT provides $\Delta x$ units of asset $X$ in exchange for asset $Y$). The LT receives $\Delta y'$ units of asset $Y$ such that $f(q^X + (1-\fee) \Delta x, q^Y-\Delta y') = f(q^X, q^Y)$, which, for a CPM, implies that
$ \Delta y' = q^Y - \frac{q^Xq^Y}{q^X + (1- \tau)\Delta x }$.

Similarly, when the LT sells $\Delta y$ units of asset $Y$, she receives $\Delta x'$ units of asset $X$ such that $f(q^X - \Delta x', q^Y +(1 - \tau)\Delta y) = f(q^X, q^Y)$, which, for a CPM implies that
$\Delta x' = q^X - \frac{q^Xq^Y}{q^Y + (1- \tau)\Delta y}
$.
% \begin{equation*}
% \Delta x' = q^X - \frac{q^Xq^Y}{q^Y + (1- \tau)\Delta y}
% \end{equation*}

\noindent {\textbf{Concentrated Liquidity.} In CPMs with CL, LPs can concentrate their liquidity over a specific range of rates $(Z^{\ell}, Z^u]$. The rates $Z^{\ell}$ and $Z^u$ take values from a discrete grid
\begin{equation}
\label{eq:grid}
    \{Z^{-N}=\underline Z,Z^{-N+1},\dots,Z^{0},\dots,Z^{N-1},Z^N=\overline Z\}
\end{equation}
where $N \in \mathbb{N}_0$. The smallest possible liquidity range is defined by two consecutive ticks $(Z^i, Z^{i+1}]$ and it is called a \textit{tick range}. Within each tick range, the CPM with CL behaves locally as a CPM.

When an LP opens a new position, she specifies a range of rates $(Z^{\ell}, Z^u]$ and a liquidity depth $\tilde{\kappa}$ (or liquidity units), which determines the quantities $x$ and $y$ in assets $X$ and $Y$ respectively that the LP should provide, namely 
\begin{comment}
$(x,y) = \left(0,\,
\tilde{\kappa}\left((Z^{\ell})^{-1/2}-(Z^{u})^{-1/2}\right)\right)$,
if $Z\leq Z^{\ell}$; $(x,y)=\left(
\tilde{\kappa}\left(Z^{1/2}-(Z^{\ell})^{1/2}\right),\,
\tilde{\kappa}\left(Z^{-1/2}-(Z^{u})^{-1/2}\right)
\right)$,
if $Z^{\ell}<Z\leq Z^{u}$; and, $(x,y)=\left(
\tilde{\kappa}\left((Z^{u})^{1/2}-(Z^{\ell})^{1/2}\right),\,
0
\right)$,
if $Z>Z^{u}$.
\end{comment}

\begin{equation}
\label{eq:uni-v3-positions}
\small
(x,y)=
\begin{cases}
\left(0,\,
\tilde{\kappa}\left((Z^{\ell})^{-1/2}-(Z^{u})^{-1/2}\right)\right),
& Z\leq Z^{\ell}, \\[0.5em]

\left(
\tilde{\kappa}\left(Z^{1/2}-(Z^{\ell})^{1/2}\right),\,
\tilde{\kappa}\left(Z^{-1/2}-(Z^{u})^{-1/2}\right)
\right),
& Z^{\ell}<Z\leq Z^{u}, \\[0.5em]

\left(
\tilde{\kappa}\left((Z^{u})^{1/2}-(Z^{\ell})^{1/2}\right),\,
0
\right),
& Z>Z^{u}.
\end{cases}
\end{equation}

% In words, %based on \eqref{eq:uni-v3-positions}, 
% when the exchange rate is $Z \le Z^\ell$, the position $(Z^\ell, Z^u]$ translates to providing only the asset $Y$, and conversely, when $Z > Z^u$, the position $(Z^\ell, Z^u]$ translates to providing only the asset $X$. When $Z^{\ell} < Z \le Z^{u}$, the position translates to providing both assets $X,Y$, and the quantities are determined by the distance of the current marginal exchange rate from the boundaries of the position $Z^{\ell}$ and $Z^u$ respectively.
To obtain the total liquidity depth in a tick range, one needs to sum the depths of the individual liquidity positions in the same tick range. The formulas in \eqref{eq:uni-v3-positions} define a reparametrization whereby the state of the CPM with CL can be fully described by the total liquidity depth in each tick range and the marginal exchange rate. This representation is more convenient since an LP action changes the liquidity depths of the respective ticks, while an LT action changes the marginal exchange rate $Z$.\footnote{For a more thorough treatment on the mechanics of CPMs with CL, see \cite{Adams21v3, Cartea2024defi}}
% Moreover, when an LT's trade is large enough to move the marginal exchange rate past the current tick range, the trade is split and executed in parts satisfying locally in each tick range the mechanics of CPM described above for the respective depth of each tick range. 
Lastly, the trading fees are stored in each tick range, and are distributed pro-rata to each LP. For example, if the LP's position with depth $\tka$ is in a tick range with total depth $\kappa$, then for every liquidity taking trade that pays in fees an amount of $p$, the LP  earns the amount of 
%\begin{equation}
%
%\label{eq:fee-pro-rata}
$\tilde{p} =  \frac{\tilde{\ka}}{\kappa} p \,  \mathds{1}_{Z^{\ell} < Z \le Z^u}$. 
%\end{equation}                    

\section{An impulse control formulation} %for liquidity provision}
\label{sec:impulse-control-lp}
%%%%%%%%%%%%%%%%%%%%%%%%%%%%%%%%%%%%%%%%%%%%%%%%%%%%%%
Having introduced the mechanics of CPMs with CL, we now turn to the dynamic liquidity provision problem faced by an individual LP. In this problem, the LP observes the market conditions, which are defined in the next paragraph, and decides when to close her position and open a new one, i.e. when to rebalance, and what the new position should be. When the LP rebalances, she collects the accumulated fees from her last position and then reinvests her initial wealth to the new position. 
Moreover, each rebalancing incurs a fixed transaction cost, which captures the gas cost paid for interacting with the blockchain, and for that reason, we formulate the problem as an impulse control problem.\footnote{The difference between an impulse control problem and a continuous stochastic control problem is that, in the former case, the controller takes actions occasionally (interventions) because of the transaction costs, while in the latter case, the controller takes actions continuously; cf. \cite{Eastham88Impulse}.}
More formally, the controller is an LP who, at a sequence of intervention times, chooses a range of ticks from \eqref{eq:grid} to post her liquidity, with $L_t=(L^i_t)_{i=-N}^{N-1}$ denoting the existing liquidity as a vector in the pool, excluding the agent.\footnote{For simplicity we take this liquidity vector to be deterministic; however, our RL implementation supports Markovian stochastic counterparts.} 

We consider that the price formation of the risky asset $Y$ is exogenous to the CPM with CL and we denote its external midprice at time $t$ as $S_t$. The external midprice $S_t$ follows the dynamics
\begin{equation} \label{eq:GBM}
dS_t = \mu S_t dt + \sigma S_t dW_t,\qquad S_0>0,
\end{equation} 
with drift $\mu$, volatility $\sigma>0$, and where $W_t$ is a Brownian motion. Concurrently, at each time $t$, the CPM with CL quotes its own price for the risky asset $Y$, the marginal exchange rate $Z_t$. 
Thus, at each time $t$, there are two prices quoted for the risky asset $Y$; the marginal exchange rate (the price quoted by the AMM) denoted by $Z_t$, and the external price, denoted by $S_t$.

In terms of the dynamics of the marginal exchange rate $Z$, we assume it lives within the grid \eqref{eq:grid}. In particular, when it is at tick $i$, an LT trade can only move it from $Z^i$ to $Z^{i+1}$ or $Z^{i-1}$ provided it stays in the grid. Hence, the marginal exchange rate $Z$ evolves solely through LT trades, which constitute the order flow. More precisely, let $\counts^{+,i}$ and $\counts^{-,i}$ denote the counting processes of trades moving the marginal exchange rate from $Z^i$ to $Z^{i+1}$ and from $Z^{i+1}$ to $Z^i$, respectively. The total number of up/down  movements $\counts^\pm$ is given by $\counts^\pm_t = \sum_i \counts^{\pm,i}_t$.

We work under a probability measure $\mathbb{P}$ under which
the counting processes $\counts^\pm_t$ (counting the number of LT trades that move the marginal exchange rate up/down in the grid \eqref{eq:grid}) have intensities
\begin{equation}
\label{eq:controlled-linear-intensities}
    \lambda^{\pm}_t
    = \max\Big\{\lambda_0,\,
    \lambda_1 \pm \lambda_2(S_t-Z_{t-}) 
    \Big\},
\end{equation}
where $\lambda_0>0$ and $\lambda_1,\lambda_2\ge0$. 
The term $\lambda_1$ captures the baseline order flow (i.e. the noise traders), while the term $\pm \lambda_2(S_t-Z_{t-})$ captures arbitrage pressure. The term $\lambda_0>0$ is a safeguard parameter to prevent intensities from becoming negative.
This model for intensities has been studied previously in \cite{Aqsha2025equilibrium}.
%,bergault2025optimal

The LP's filtration (the LP's observable information when taking new actions) will be generated by $S_t$ and $\counts_t^\pm$, since $\counts_t^{\pm,i}$ is determined by $\counts_t^\pm$ and the marginal exchange rate $Z_t$. 
Hence, the LP's control is an impulse strategy
\begin{equation}
\label{eq:lp-impulse-control}
\textstyle
    \alpha := \big( \imp_n, Z^{\ell}_n, Z^u_n \big)_{n\ge 1},
\end{equation}
where $(\imp_n)_{n\ge1}$ is an increasing sequence of stopping times and, at each intervention time $\imp_n$, the LP chooses a lower and an upper tick satisfying $\underline Z \le Z^{\ell}_n < Z^u_n \le \overline Z$.   

The pair $(Z^\ell_n,Z^u_n]$ specifies the liquidity range until the next intervention. 
Equivalently, if $i_t$ denotes the active tick at time $t$, the LP's liquidity vector is
\begin{equation}
\tilde L^{\alpha,i}_t
    = \tilde\kappa^\alpha_t\,
    \mathbf{1}_{\{Z^\ell_t < Z^{i_t} \le Z^{i_t+1}\le Z^u_t\}},
    \qquad i=-N,\dots,N-1,  
\end{equation}
where $\tilde\kappa^\alpha_t$ is the liquidity depth deployed in the selected range, determined by the LP's available capital at rebalancing and the value per liquidity unit in that range. We then define the total liquidity depth in tick range $(Z^i,Z^{i+1}]$ by $\kappa^{\alpha,i}_t:=L^i_t+\tilde L^{\alpha,i}_t$.

% We write $\mathcal{A}$ for the set of admissible strategies $\alpha$ such that $(\imp_n)_{n\ge1}$ are ordered stopping times, the chosen ranges are predictable at the intervention times, and the cumulative intervention costs are integrable. 

The fees collected by the LP are proportional to her share of liquidity in the current tick. 
Thus, for a fee rate $\fee\in(0,1)$, the cumulative fee process is
\begin{align}
\label{eq:controlled-fee-process}
P^\alpha_t
    &= \textstyle\sum_{i=-N}^{N-1}\left[
    \int_0^t
    \frac{\tilde L^{\alpha,i}_{s-}}{\kappa^{\alpha,i}_{s-}}
    \left(\frac{\fee}{1-\fee}\right)
    \Delta x^i_s \,d \counts^{+,i}_s \nonumber +
    \int_0^t
    \frac{\tilde L^{\alpha,i}_{s-}}{\kappa^{\alpha,i}_{s-}}
    \left(\frac{\fee}{1-\fee}\right)
    S_s\,\Delta y^i_s \,d \counts^{-,i}_s
    \right]\notag
\end{align}
where $\Delta x^i_s$ and $\Delta y^i_s$ are the trade sizes required to move the marginal exchange rate by one tick in the corresponding direction. 
The dependence on the chosen width enters through $\tilde L^{\alpha,i}$: 
for fixed amounts of capital, a narrower range creates a larger liquidity share in the active tick, while a wider range lowers the share but remains active in more price states. In our setup, each intervention incurs a fixed transaction cost; 
this is a key distinction from the frictionless case in ~\cite{Cartea2024defi}. When the LP changes her position at time $\imp_n$, she pays a fixed gas cost. 
% and a proportional cost generated by the change in the token composition of her position
We write the cumulative cost up to time $t$ as
% \begin{equation}
% \label{eq:impulse-lp-cost}
%     C^\alpha_t=
%     \textstyle\sum_{\imp_n\le t}
%     \left(g+\zeta Z_{\imp_n}|\Delta Y^\alpha_{\imp_n}|
%     \right), \notag
% \end{equation}
$C^\alpha_t= \textstyle\sum_{\imp_n\le t}g$, where $g> 0$ is the gas cost for each new LP action.
%$C^\alpha_t=\textstyle\sum_{\imp_n\le t}g, \notag$ where $g>0$ is the gas cost for each new LP action.
% , $\zeta\ge0$ is a proportional execution cost parameter, and $\Delta Y^\alpha_{\imp_n}$ is the change in the amount of risky asset $Y$ induced by the rebalance

Let $V^\alpha_t$ be the value of the LP's current position in units of asset $X$ at the current external price, i.e. $V^\alpha_t = x^\alpha_t+S_ty^\alpha_t\,.$
Relative to holding the initial token amounts outside the pool, the realised IL term is $\mathrm{IL}^\alpha_t=-\Big[V^\alpha_t-\big(x^\alpha_0+S_ty^\alpha_0\big)\Big]$. The risk-neutral LP solves
\begin{equation}
\label{eq:lp-impulse-risk-neutral}
\sup_{\alpha\in\mathcal{A}}
\mathbb{E}^{\mathbb{P}^\alpha}
\left[
P^\alpha_T
- \mathrm{IL}^\alpha_T
- C^\alpha_T
\right].
\end{equation}
where $P^\alpha_T$ denotes collected fees, $\mathrm{IL}^\alpha_T$ denotes impermanent loss, and $C^\alpha_T$ denotes rebalancing costs.

% Equivalently, we also consider the case of risk aversion through exponential utility for which we solve
% \begin{equation}
% \label{eq:lp-impulse-risk-averse}
%     \sup_{\alpha\in\mathcal{A}}
%     \mathbb{E}^{\mathbb{P}^\alpha}
%     \left[
%         -\exp\left(
%             -\gamma
%             \big(
%                 P^\alpha_T
%                 -
%                 \mathrm{IL}^\alpha_T
%                 -
%                 C^\alpha_T
%             \big)
%         \right)
%     \right],
%     \qquad \gamma>0 .
% \end{equation}

The problem therefore captures the main trade-off faced by a LP in a CPM with CL; narrow ranges increase the fee share conditional on being active, but expose the position to concentration risk and more frequent rebalancing, while wider ranges lower the fee share but reduce the probability of falling out of range. 
The impulse structure makes the rebalancing friction explicit. 
% , and the controlled intensities in \eqref{eq:controlled-linear-intensities} allow the LP's liquidity placement to affect future order flow.

\noindent \textbf{Risk Aversion through Running Inventory Penalty.} Besides the risk neutral case defined in \eqref{eq:lp-impulse-risk-neutral}, we will modulate the performance of the LP under different levels of risk aversion. 
The simplest way to introduce risk aversion is through an inventory penalty. We therefore extend the objective function in \eqref{eq:lp-impulse-risk-neutral} as follows
\begin{equation} 
\label{eq:running_inv}
           \sup_{\alpha\in\mathcal{A}}
    \mathbb{E}^{\mathbb{P}^\alpha}\left[{ P^\alpha_T }
        -
        {\mathrm{IL}^\alpha_T}
        -
        {C^\alpha_T} - \pen \int_0^T(y_t^\alpha - \hat{y})^2dt\right],
\end{equation}
where the constant $\pen \in \mathbb{R}^+$ determine the running penalty deviating from some level $\hat{y}$ (see \cite{Cartea23AMMDesigns}). In particular, the LP's performance criterion \eqref{eq:running_inv} aims to maintain a target reserve level $\hat{y}$. Note that without a target reserve $\hat{y}$, the performance criterion would essentially force the LP to liquidate the risky asset that she holds.
It is well-known in the algorithmic trading literature (as shown in ~\cite{Cartea17Uncetrainty}) that the running penalty term in \eqref{eq:running_inv} can be interpreted as arising from the agent's ambiguity aversion with respect to the external price $S_t$.
\begin{comment}
\footnote{There are other well studied forms of adding calibrated risk aversion, e.g., the exponential utility $    \sup_{\alpha\in\mathcal{A}} \mathbb{E}^{\mathbb{P}^\alpha}
    \left[
        -\exp\left(
            -\gamma
            \big(
                P^\alpha_T
                -
                \mathrm{IL}^\alpha_T
                -
                C^\alpha_T
            \big)
        \right)
    \right], $
    where $\gamma>0$ is the risk aversion level.} 
\end{comment}
As we show in \cref{sec:experiments}, the per-step running inventory penalty is the main component pushing the LP towards wider positions.

% \paragraph*{Risk Aversion through exponential utility} 
% Another way in which we consider risk aversion is through a parameter $\gamma$ in the optimisation of exponential utility  
% \begin{equation}
% \label{eq:lp-impulse-risk-averse}
%     \sup_{\alpha\in\mathcal{A}}
%     \mathbb{E}^{\mathbb{P}^\alpha}
%     \left[
%         -\exp\left(
%             -\gamma
%             \big(
%                 P^\alpha_T
%                 -
%                 \mathrm{IL}^\alpha_T
%                 -
%                 C^\alpha_T
%             \big)
%         \right)
%     \right],
%     \qquad \gamma>0 .
% \end{equation}
\noindent \textbf{Theoretical Results.}\label{sec:theory}
Two structural properties of the LP's PnL follow directly from the CL mechanics of
Section~\ref{sec:prelims}, and both are used to interpret the learned policies in
Section~\ref{sec:experiments}. 
 
\begin{lemma}[Fee revenue is linear in own liquidity]
\label{lem:linear-fee-revenue-liquidity}
Under one-tick jumps, the fees collected by an individual LP are linear in $\tka$ and
independent of the other LPs liquidity $L^i$ in the pool.
\end{lemma}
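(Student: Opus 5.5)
The plan is to compute, for a single one-tick move, the fee the LP receives, and to show that the trade sizes $\Delta x^i$ and $\Delta y^i$ scale with the total depth $\kappa^{\alpha,i}$. Under one-tick jumps this depth cancels against the pro-rata share $\tilde L^{\alpha,i}/\kappa^{\alpha,i}$ in the fee process. Summing over jumps in \eqref{eq:controlled-fee-process} then gives the result.

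Fix a tick range $(Z^i,Z^{i+1}]$ with total depth $\kappa=\kappa^{\alpha,i}_{s-}=L^i_{s-}+\tilde L^{\alpha,i}_{s-}$. Within the tick the market is locally a CPM, so by the reparametrization \eqref{eq:uni-v3-positions} the virtual reserves satisfy $x=\kappa Z^{1/2}+c_x$ and $y=\kappa Z^{-1/2}+c_y$, where the constants do not matter for differences.

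For an up-move from $Z^i$ to $Z^{i+1}$ I compute the net amount of $X$ that must enter the pool:
\begin{equation*}
(1-\fee)\Delta x^i=\kappa\big((Z^{i+1})^{1/2}-(Z^i)^{1/2}\big).
\end{equation*}
I will use the convention that matches the fee factor $\fee/(1-\fee)$ in \eqref{eq:controlled-fee-process}, namely that the fee paid is $\fee\Delta x^i$, which equals $\frac{\fee}{1-\fee}$ times the net amount. Consistency of these conventions is the step I expect to need the most care. Either way, $\Delta x^i=\kappa\, c^{+,i}$, where $c^{+,i}$ depends only on the grid and on $\fee$. Similarly, for a down-move, $\Delta y^i=\kappa\, c^{-,i}$ with $c^{-,i}$ proportional to $(Z^i)^{-1/2}-(Z^{i+1})^{-1/2}$.

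Substituting into the integrands gives
\begin{equation*}
\frac{\tilde L^{\alpha,i}_{s-}}{\kappa}\,\frac{\fee}{1-\fee}\,\Delta x^i_s=\frac{\fee}{1-\fee}\,c^{+,i}\,\tilde L^{\alpha,i}_{s-}.
\end{equation*}
The analogous identity holds with $S_s c^{-,i}$ for down-moves. Since $\tilde L^{\alpha,i}=\tka\,\mathbf 1_{\{\cdot\}}$, this yields
\begin{equation*}
P^\alpha_t=\tka\sum_i\frac{\fee}{1-\fee}\int_0^t\mathbf 1_{\{\cdot\}}\big(c^{+,i}\,d\counts^{+,i}_s+S_s\,c^{-,i}\,d\counts^{-,i}_s\big).
\end{equation*}
This expression is linear in $\tka$ and contains no $L^i$. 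Between interventions $\tka$ is constant, so on each inter-intervention interval I treat it piecewise.

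The main obstacle is the precise sense of "independent of $L^i$." The pathwise identity above holds for given counting paths. The counting processes, however, have intensities \eqref{eq:controlled-linear-intensities} that depend only on $S_t$ and $Z_{t-}$, not on depths. Under one-tick jumps the path of $Z$ is therefore unaffected by $L$ or $\tka$, because a deeper pool just requires larger trades per tick. I will state this explicitly, so that both the pathwise fees and their law are linear in $\tka$ and free of $L^i$.

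The one-tick assumption is essential. With multi-tick trades, a trade of fixed size would traverse a number of ticks depending on $\kappa$, which would break the cancellation.
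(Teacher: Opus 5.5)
Your proposal is correct and takes essentially the same route as the paper's sketch. The volume needed to cross tick $i$ is proportional to the total depth $\kappa^{\alpha,i}$; this cancels against the pro-rata share $\tilde L^{\alpha,i}/\kappa^{\alpha,i}$ and leaves an integrand proportional to $\tka$ and free of $L^i$. The paper settles your convention worry by taking $\Delta x^i=\kappa^{i}\big((Z^{i+1})^{1/2}-(Z^{i})^{1/2}\big)$ to be the after-fee amount, so $\frac{\fee}{1-\fee}\Delta x^i$ is exactly the fee. Your remark that the intensities do not depend on depth is a useful explicit version of the paper's small-LP caveat.
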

 
\begin{proof}[Proof Sketch]
Moving from $Z^i$ to $Z^{i+1}$ requires an after-fee trade of size
$\Delta x^i=\ka^{i}\big((Z^{i+1})^{1/2}-(Z^{i})^{1/2}\big)$ in $X$, and symmetrically
$\Delta y^i=\ka^{i}\big((Z^{i})^{-1/2}-(Z^{i+1})^{-1/2}\big)$ in $Y$ in the opposite direction: the volume needed to cross a tick is proportional to the \emph{total} depth $\ka^{i}$ of that tick. Since the LP earns the pro-rata share $\tka^{i}/\ka^{i}$ of the resulting fee, $\ka^{i}$ cancels in the LP's fee revenue process $P^\alpha_t$, leaving an integrand proportional to $\tka^{i}_{s-}$ and free
of $L^i$. Deeper liquidity in a range absorbs more volume per tick crossed, and the pro-rata rule returns to the LP exactly her share of that extra volume.
\end{proof}

%\noindent \textcolor{black}{This invariance to competing liquidity is specific to the one-tick-crossing order-flow representation adopted here. Under an alternative model in which incoming order quantity is primitive, competing liquidity would generally affect price impact, tick-crossing probabilities, and hence fee revenue.}

\begin{lemma}[IL is linear in own liquidity]
\label{lem:linear-position-change-liquidity}
For a position of depth $\tka$ in $(Z^{\ell},Z^u]$, the IL between posting and withdrawal is linear in $\tka$.
\end{lemma}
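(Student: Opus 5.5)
The plan is to write the IL explicitly using the position formulas \eqref{eq:uni-v3-positions} and observe that every term carries a factor $\tka$. Fix a position opened at time $\imp_n$ with range $(Z^{\ell},Z^u]$ and depth $\tka$, and let $\imp_{n+1}$ be the withdrawal time. By \eqref{eq:uni-v3-positions}, the token holdings at any time $t$ in $[\imp_n,\imp_{n+1}]$ can be written as
\[
(x_t,y_t)=\tka\,\big(a(Z_t),\,b(Z_t)\big).
\]
Here $a$ and $b$ are functions of the pool rate $Z_t$ and of the fixed endpoints $Z^{\ell},Z^u$ only. For instance, $a(Z)=\min\{\max\{Z,Z^\ell\},Z^u\}^{1/2}-(Z^\ell)^{1/2}$, and $b$ is defined analogously with the reciprocal square roots. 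So in all three cases of \eqref{eq:uni-v3-positions} the map $\tka\mapsto(x,y)$ is linear, and the coefficients do not depend on $\tka$.

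Next I substitute into the definition $\mathrm{IL}=-\big[V_{\imp_{n+1}}-(x_{\imp_n}+S_{\imp_{n+1}}y_{\imp_n})\big]$, where $V_{\imp_{n+1}}=x_{\imp_{n+1}}+S_{\imp_{n+1}}y_{\imp_{n+1}}$. This gives
\[
\mathrm{IL}=\tka\Big[a(Z_{\imp_n})-a(Z_{\imp_{n+1}})+S_{\imp_{n+1}}\big(b(Z_{\imp_n})-b(Z_{\imp_{n+1}})\big)\Big].
\]
The bracket depends only on the range, on the pool rates at posting and withdrawal, and on the external price.

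The step that needs care is showing the bracket does not secretly depend on $\tka$ through the path of $Z_t$. In our model, $Z$ moves only through one-tick jumps driven by $\counts^\pm$. Their intensities \eqref{eq:controlled-linear-intensities} depend on $S_t$ and $Z_{t-}$, not on liquidity depths. Hence the law of $(Z_t,S_t)$, and in particular $Z_{\imp_{n+1}}$, is the same whatever $\tka$ is, as long as the stopping rule is held fixed. The same is true for $S$. Moreover, the LP's token composition at any $t$ depends only on $Z_t$: within each tick range the CPM invariant with depth $\ka^i$ makes the LP's pro-rata share of reserves equal to $\tka$ times the one-unit amount, and the total depth $\ka^i=L^i+\tka$ cancels, exactly as in the proof of \cref{lem:linear-fee-revenue-liquidity}. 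This also gives independence from $L^i$.

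Putting these together, the IL is $\tka$ times a quantity determined by $(Z^\ell,Z^u,Z_{\imp_n},Z_{\imp_{n+1}},S_{\imp_{n+1}})$. It is therefore linear in $\tka$ pathwise, and hence also in expectation.
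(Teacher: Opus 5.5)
Your proposal is correct and follows the paper's proof. Both rest on the degree-one homogeneity in $\tka$ of every branch of \eqref{eq:uni-v3-positions}, which your clamped functions $a,b$ encode in one formula; substituting into the IL definition then factors out $\tka$ and recovers the paper's in-range expression as a special case, and your remark that the intensities \eqref{eq:controlled-linear-intensities} involve no liquidity depths, so $\tka$ leaves $(Z,S)$ unaffected, simply makes explicit in this model the small-LP assumption the paper states after \Cref{theorem:linearity}.
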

 
\begin{proof}[Proof Sketch]
Every branch of \eqref{eq:uni-v3-positions} is positively homogeneous of degree one in $\tka$, so
$x_t$, $y_t$, and hence $\IL_t=-\big[x_t+y_tS_t-(x_0+y_0S_t)\big]$, are linear in $\tka$, whether or not the position is in range. While $Z_t\in(Z^{\ell},Z^u]$ the boundary terms cancel and $\IL_t=\tka\big[Z_0^{1/2}-Z_t^{1/2}+S_t\big(Z_0^{-1/2}-Z_t^{-1/2}\big)\big]$, so the range enters only through $\tka$. For fixed capital, a narrower range implies a larger $\tka$, which makes the IL curve locally steeper, showing the cost side of concentration.
\end{proof}

\noindent From \Cref{lem:linear-fee-revenue-liquidity} and \Cref{lem:linear-position-change-liquidity}, we thus have:

\begin{theorem}[Affinity of the LP's PnL in liquidity]\label{theorem:linearity}
The LP's PnL $P^{\al}_T-\IL^{\al}_T-C^{\al}_T$ is affine in $\tka$: it is linear through the fee and IL terms, while the gas cost $C^{\al}_T$ enters as a $\tka$-independent constant.
\end{theorem}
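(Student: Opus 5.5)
The plan is to decompose the PnL into its three components and treat each by the earlier lemmas, holding fixed everything except the LP's own depth. Fix a strategy $\al$, that is, the intervention times $(\imp_n)$ and ranges $(Z^\ell_n,Z^u_n]$, together with a realisation of the exogenous drivers $S$ and $\counts^{\pm}$. We regard the depth $\tka$ deployed in the current position as the variable. Under the one-tick-jump order flow, the path of $Z$ is determined by $\counts^{\pm}$. The intensities \eqref{eq:controlled-linear-intensities} depend on $S$ and $Z$ only, not on liquidity. Hence changing $\tka$ does not change the law of the path, and the statement can be proved pathwise. Linearity of the expectation then transfers it to the objective \eqref{eq:lp-impulse-risk-neutral}.

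The first step is the fee term. By \Cref{lem:linear-fee-revenue-liquidity}, on each interval between consecutive interventions the integrand of $P^\al_t$ equals $\tka_{s-}$ times a quantity built from $\fee$, the tick geometry and $S_s$. Concretely, $\frac{\fee}{1-\fee}\big((Z^{i+1})^{1/2}-(Z^{i})^{1/2}\big)$ multiplies $d\counts^{+,i}_s$, and $\frac{\fee}{1-\fee}S_s\big((Z^{i})^{-1/2}-(Z^{i+1})^{-1/2}\big)$ multiplies $d\counts^{-,i}_s$, all restricted to the in-range indicator. Summing over ticks and intervention intervals gives $P^\al_T=\tka\,A_T$, where $A_T$ does not depend on $\tka$ or on $L$.

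The second step is the IL term. By \Cref{lem:linear-position-change-liquidity}, every branch of \eqref{eq:uni-v3-positions} is homogeneous of degree one in $\tka$. So both $x^\al_t,y^\al_t$ and $x^\al_0,y^\al_0$ scale with $\tka$, and $\IL^\al_T=\tka\,B_T$ with $B_T$ free of $\tka$. The third step is the cost term: $C^\al_T=g\cdot\#\{n:\imp_n\le T\}$ depends only on the intervention times. Adding the three gives $\tka(A_T-B_T)-C^\al_T$, which is affine in $\tka$ with intercept $-C^\al_T$.

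The main obstacle is making precise what ``in $\tka$'' means across several rebalances. The depth at the $n$-th intervention is fixed by the reinvested capital, so the depths differ from one interval to the next. I would handle this in one of two ways. One option is to state the result per position, with depth $\tka_n$ on $[\imp_n,\imp_{n+1})$; the PnL is then affine in the vector $(\tka_n)_n$, with constant term $-C^\al_T$. The other is to parametrise every $\tka_n$ as a fixed multiple of the initial depth, since capital reinvested is initial wealth, and conclude linearity in a single scale parameter. The key point to verify is that the stopping times and ranges are held fixed when $\tka$ varies, so that the indicators and the count of interventions do not move.
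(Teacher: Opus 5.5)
Your proposal is correct and follows the paper's route: the theorem is obtained by combining \Cref{lem:linear-fee-revenue-liquidity} (the total depth $\ka^{i}$ cancels in the fee integrand, leaving it proportional to $\tka$) and \Cref{lem:linear-position-change-liquidity} (degree-one homogeneity of \eqref{eq:uni-v3-positions}) with the fact that $C^\al_T=g\cdot\#\{n:\imp_n\le T\}$ does not involve $\tka$. Your extra care goes beyond the paper's one-line deduction but is consistent with it. Holding the intervention times and ranges fixed is legitimate, since the LP's filtration is generated by $S$ and $\counts^{\pm}$, whose intensities do not involve liquidity; handling several rebalances via the vector $(\tka_n)_n$ or a single wealth scale makes explicit what the paper leaves implicit.
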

These results assume an LP whose liquidity is small relative to the pool, so that her actions do not affect the liquidity taking activity in the future.%total order flow.

\noindent \textbf{Learning Agents as Dynamic LPs.} The impulse control problems in \eqref{eq:lp-impulse-risk-neutral} and \eqref{eq:running_inv} are analytically intractable in realistic settings, since the LP must jointly decide when to intervene and where to deploy liquidity under stochastic prices, order arrivals, gas costs, inventory exposure, and a discrete tick grid. RL provides a natural numerical approach for learning admissible strategies of the form \eqref{eq:lp-impulse-control}, with rewards defined as sample-path counterparts of the objectives inside the supremum. For a risk-neutral LP, the reward is realised PnL from fees, IL, and rebalancing costs, corresponding to \eqref{eq:lp-impulse-risk-neutral}; for a risk-averse LP, we use the inventory-penalised criterion in \eqref{eq:running_inv}, where $\pen$ controls the degree of risk aversion.

The mechanics and frictions of CL, together with \Cref{theorem:linearity}, imply the key economic trade-off studied in our experiments. Narrower ranges increase fee income conditional on being active but expose the LP to concentration risk and costly rebalancing; wider ranges remain active over more price states and reduce out-of-range and inventory risk, but dilute liquidity across ticks and lower the LP’s fee share per tick. In the risk-neutral case, the linearity of fee revenue and IL in the LP’s own liquidity implies a strong incentive to concentrate liquidity in the expected most profitable tick range, providing a theoretical benchmark for studying risk aversion, inventory penalties, and rebalancing frictions.

We therefore consider two action parametrisations. A flexible agent that freely chooses the lower and upper bounds $(Z^\ell_n,Z^u_n]$ and a narrow agent with the minimum fixed position width of two ticks $(Z^i_n,Z^{i+2}_n]$ that learns only where to place her liquidity. The narrow specification isolates the value of learning \emph{where} to post liquidity from the value of learning \emph{how wide} the position should be. Decisions are made only at intervention times, consistent with the impulse-control formulation and with the fact that continuous rebalancing is unrealistic under blockchain frictions. 

\section{Experimental Results} 
\label{sec:experiments}

Based on the above, in the following experiments, we implement two RL agents with the Proximal Policy Optimisation (PPO) algorithm, namely the PPO and PPO\_narrow. For the action space, when $(Z^\ell_{n+1},Z^u_{n+1}]=(Z^\ell_{n},Z^u_{n}]$, the agent chooses to hold her previous position, and thus no extra gas costs are charged. For the first action, if $(Z_0^\ell, Z_0^u]=(0,0]$ the agent chooses not to provide liquidity. The observation space contains the main variables needed for liquidity provision: mispricing $S_t-Z_t$, distances from the current pool price to the position boundaries, time, gas cost, cumulative fee income, and the LP’s current holdings in the risky and numéraire assets. 

In our experiments, each trained policy and benchmark strategy is evaluated on 1000 independent simulations using random seeds distinct from those used during training. We compare strategies across scenarios and profile the learned policies as state-dependent functions, relating their behaviour to empirical liquidity-provision patterns. The aim is to understand how learned LP policies adapt their position width, asymmetry, and rebalancing frequency to the LP’s risk appetite and to the stochastic and blockchain-specific properties of the environment. 
% All results are reproducible, and the anonymised code repository is available at https://shorturl.at/B8B81. 

\textbf{Experimental Setup:} We train and evaluate PPO and PPO\_narrow, in the stochastic environment of \Cref{sec:impulse-control-lp}. The external price of the risky asset $Y$ (e.g. ETH) follows the geometric Brownian motion in \eqref{eq:GBM}. The liquidity-taking arrivals follow the linear intensity model in \eqref{eq:controlled-linear-intensities}, for which we set $\lambda_1=15$ and $\lambda_2=4000$. These values represent an order-flow intensity that is mainly driven by arbitrage pressure from deviations between the pool price and the external price. Equations (\ref{eq:GBM}) and (\ref{eq:controlled-linear-intensities}) are the two sources of stochasticity in the experiments. For the external price, we set $S_0=1000$, $dt=0.001$ (assuming $T=1$), $\mu=0$, and vary volatility over $\sigma \in {\{0.005,0.01,0.015,0.02,0.025,0.03\}}$. Higher $\sigma$ increases the likelihood of sharp price movements, IL, and positions moving out of range. The main blockchain friction is the gas cost paid at each rebalance, for which we consider $g \in {\{2,4,6\}}$, corresponding to low, medium, and high operational costs. We train both risk-neutral and risk-averse LPs: risk-neutral agents use the cumulative PnL reward in \eqref{eq:lp-impulse-risk-neutral}, while risk-averse agents use the inventory-penalised reward in \eqref{eq:running_inv}, with $\pen \in \{20,50,80\}$ controlling the strength of inventory risk aversion. As observed in practice, the pool price $Z_t$ follows closely $S_t$ due to arbitrage. Therefore, we set the parameters of $S_t$ within the range of empirical estimations calculated in \cite{Cartea2024defi}. The rest of the parameters were calibrated to be economically sensible based on the above within our model-based setup. The action range is restricted to $[-15,15]$ ticks around the current tick. Each LP starts with 1000 numéraire units, and the proportional fee tier is set to $\fee=0.3\%$, a standard UniswapV3 fee tier for token pairs.

We benchmark PPO and PPO\_narrow against rule-based and model-based strategies operating within the same $[-15,15]$ tick window. The \textit{DeployOnceAgent} deploys capital at the start and never rebalances. We use two variants: \textit{DeployNarrow}, which deploys over $[-1,1]$ around the initial pool price, and \textit{DeployWide}, which deploys over $[-15,15]$. The former represents the narrow single-shot allocation motivated by the concentration incentives in \Cref{sec:theory}, while the latter maximises the probability of remaining active throughout the simulation. The \textit{ArrivalRebalanceAgent} also posts liquidity over $[-1,1]$, but rebalances after every $N_{arr}$ liquidity-taking arrivals, adding a simple timing rule relative to \textit{DeployNarrow}. Finally, we compare against the \textit{CarteaDrissiMongaAgent} (CDM), a model-based closed-form strategy for concentrated AMMs based on \cite{Cartea2024defi}. 

Each trained RL agent corresponds to an independent environment instance, giving 72 scenarios in total ($6\cdot3\cdot4$) across $\sigma$, $g$, and risk-aversion settings. To ensure statistical robustness across the training variability of the RL algorithms, each scenario is assessed across 10 different training seeds. To reduce policy-gradient variance and expose policies to a broad state distribution, we collect rollouts using $N_{\mathrm{traj,train}}=200$ parallel trajectories advanced synchronously. Each trajectory has $N_{\mathrm{steps}}=1000$ environment steps, with intervention times every $K_{\mathrm{dec}}=100$ steps. Thus, each rollout contains $N_{\mathrm{traj,train}}N_{\mathrm{steps}}/K_{\mathrm{dec}}=2000$ agent-decision transitions. We set $E_{\mathrm{train}}=300$, yielding a total training budget of $6\times10^5$

\begin{table}[t]
\centering
\caption{Mean PnL and $5\%$ CVaR in USDC over 1000 common evaluation trajectories. Values following $\pm$ denote the standard deviation across 10 independently trained policies. Bold marks the best RL and benchmark method. $p$ tests their difference, treating training seeds as replication units and additionally propagating trajectory sampling error.}\vspace{-0.15cm}
\label{tab:main1}
\resizebox{0.475\textwidth}{!}{%
\begin{tabular}{@{}l*{4}{c}@{}}
\toprule
& \multicolumn{2}{c}{$\sigma=0.01,\ g=2$}
& \multicolumn{2}{c}{$\sigma=0.02,\ g=4$}\\
\cmidrule(lr){2-3}\cmidrule(lr){4-5}
Agent & PnL & CVaR & PnL & CVaR\\
\midrule
DeployNarrow     & 15.85          & -14.52 & 6.48          & -35.37\\
DeployWide       & 12.78          & -15.11 & 4.09                    & -36.48\\
ArrivalRebalance & \textbf{49.72} & 6.86   & 5.69                    & -36.23\\
CDM              & 35.18          & \textbf{11.55}  & \textbf{15.01}                   & \textbf{-26.23}\\
\midrule
&\multicolumn{4}{c}{Risk neutral}\\
\cmidrule(l){2-5}
PPO        & 42.31$\pm$0.97          & 6.08$\pm$0.97   & 5.33$\pm$0.65           & -36.51$\pm$0.77\\
PPO\_narrow & \textbf{49.91$\pm$0.38} & 9.15$\pm$0.65   & \textbf{6.51$\pm$0.20}  & -35.69$\pm$0.42\\
\cmidrule(l){2-5}
$p$        & 0.81                    & 0.0014          & $<10^{-10}$                  & $<10^{-4}$\\
\midrule
&\multicolumn{4}{c}{Risk averse ($\phi=50$)}\\
\cmidrule(l){2-5}
PPO        & 30.36$\pm$1.60 & 3.07$\pm$1.56          & 1.49$\pm$0.16 & -9.14$\pm$0.46\\
PPO\_narrow & 46.11$\pm$0.67 & \textbf{9.32$\pm$0.74} & 2.27$\pm$0.40 & \textbf{-8.86$\pm$0.87}\\
\cmidrule(l){2-5}
$p$        & $<10^{-10}$    & 0.04                   & $<10^{-10}$   & $<10^{-4}$\\
\bottomrule
\end{tabular}%
}
\end{table}

\begin{table}[t]
\centering
\caption{Under adversarial conditions ($\sigma=0.03$, $g=6$), the risk-averse RL agents learn mainly not to deploy.}\vspace{-0.15cm}
\label{tab:main2}
\resizebox{0.475\textwidth}{!}{%
\begin{tabular}{@{}l*{4}{c}@{}}
\toprule
& \multicolumn{2}{c}{$\sigma=0.03,\ g=6$}\\
\cmidrule(l){2-3}
Agent & PnL & CVaR\\
\midrule
DeployNarrow     & \textbf{0.10} & \textbf{-56.1}7\\
DeployWide       & -2.12         & -57.33\\
ArrivalRebalance & -5.38         & -62.64\\
CDM              & -3.10         & -56.93\\
\midrule
&\multicolumn{2}{c}{Risk neutral}&\multicolumn{2}{c}{Risk averse ($\phi=50$)}\\
\cmidrule(lr){2-3}\cmidrule(l){4-5}
PPO        & -1.42$\pm$0.80          & -54.78$\pm$6.61& \textbf{0.00$\pm$0.04}  & -0.68$\pm$1.47\\
PPO\_narrow & -0.06$\pm$0.56 & -52.68$\pm$7.73& -0.01$\pm$0.03 & \textbf{-0.16$\pm$0.51}\\
\cmidrule(l){2-5}
$p$        & 0.64                    & 0.19  & 0.88            & $<10^{-4}$\\
\bottomrule
\end{tabular}%
}
\end{table}

\noindent agent-decision transitions for the PPO optimisation objective, after which average returns stabilise without material performance gains from further training. For the RL architecture we set: learning rate $=3\times10^{-4}$, batch size $= 200$, $\gamma=0.99$, GAE $\lambda=0.95$, clip range $= 0.15$, entropy $= 0.03$ and two hidden MLP layers of 256 units.

\textbf{Performance of Learned Policies: }Tables~\ref{tab:main1} and~\ref{tab:main2} compare the performance of the RL agents and the benchmark strategies across

\begin{figure}[t]
\centering
\vspace{-0.15cm}
\includegraphics[width=0.9\columnwidth]{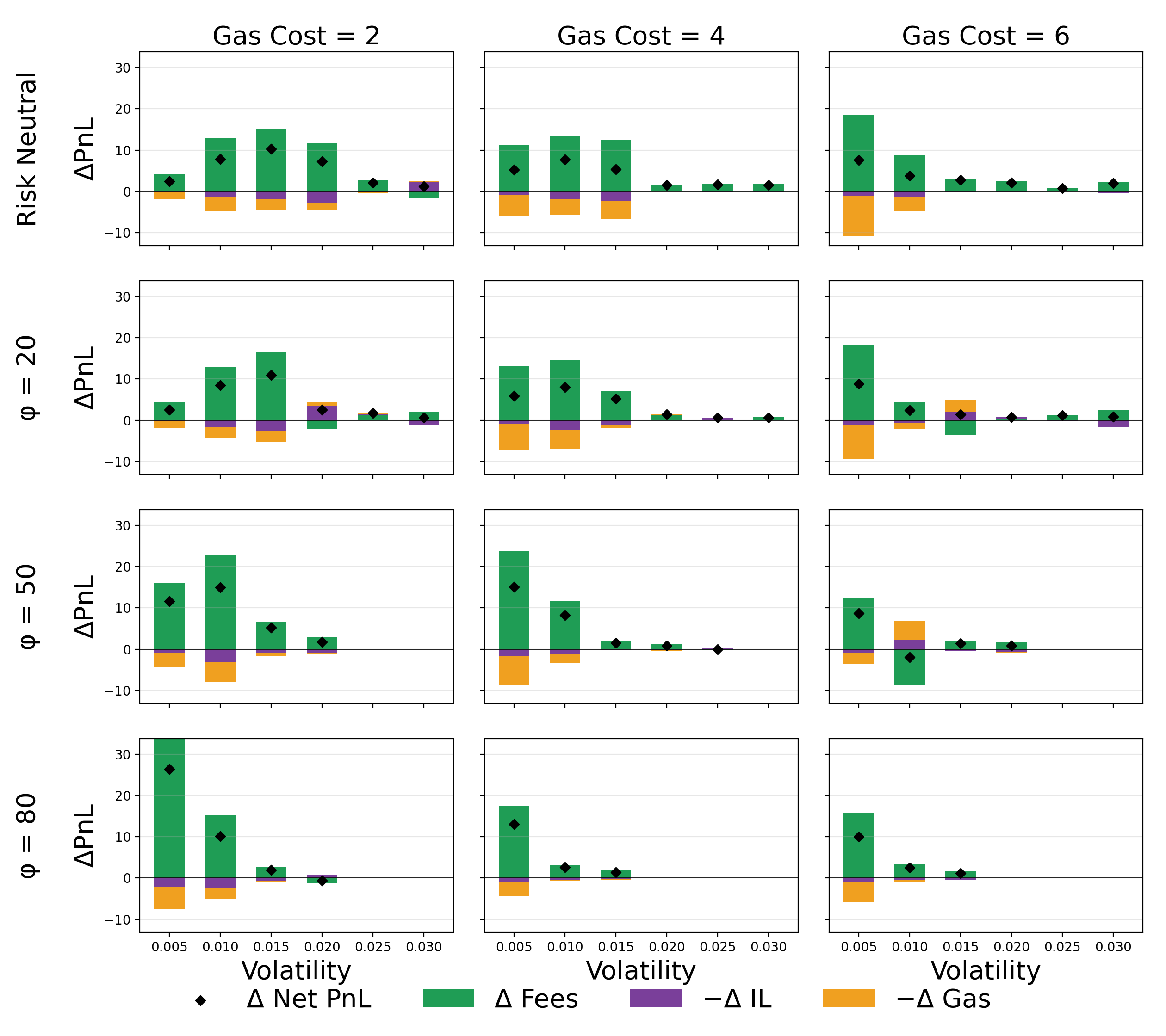}
    \caption{Differences in the mean decomposed PnL components between the RL agents across volatility, by risk profile (rows) and gas cost (columns). $\Delta \mathrm{Net\ PnL}=\Delta \mathrm{Fees}+(-\Delta \mathrm{IL})+(-\Delta \mathrm{Gas})=\text{{\normalfont PPO\_narrow(PnL)}}-\text{{\normalfont PPO(PnL)}}$ denotes the difference in mean PnL and is shown by the diamond marker. A positive $\Delta \mathrm{Fees}$ (green), negative $-\Delta \mathrm{IL}$ (purple), and negative $-\Delta \mathrm{Gas}$ (orange) indicate that PPO\_narrow earns more fees, incurs greater IL, and pays more gas than PPO, respectively. Each quantity is averaged over 1,000 trajectories. Missing bars indicate that one or both agents did not deploy.}
    \label{fig:pnl_sub_b}
%Panel~(\subref{fig:pnl_sub_a}) shows mean PnL, while Panel~(\subref{fig:pnl_sub_b}) decomposes the RL agents' PnL difference into fees, IL, and gas costs.}
\label{fig:collective_pnl_plots}\vspace{-0.15cm}
\end{figure}

\noindent representative market settings chosen as a coherent regime ladder, corresponding to low, medium, and high combinations of $\sigma$ and $g$, under both risk neutrality and an intermediate level of risk aversion ($\phi=50$). The significance tests for PnL use a hierarchical paired $t$-test and for CVaR, a paired bootstrap over both seeds and trajectories ($10^4$ replicates). CDM does not directly account gas costs. Thus, for an optimistic frictionless analytical benchmark version against the RL agents, we consider that the strategy rebalances every $K_{dec}$ free of costs. Its risk aversion lies in its formulation that depends on $\sigma$. For the ArrivalRebalance agent, we finetuned $N_{arr}$ for each scenario from a gridsearch ranging from $N_{arr}=10$ to $800$, optimising both PnL and CVaR (Conditional Value at Risk).

Under the low-volatility, low-gas setting, for the risk neutral case, PPO\_narrow achieves similar PnL with ArrivalRebalance, however with higher CVaR than it, resulting in a better overall strategy. CDM achieves slightly improved CVaR than the risk averse agent, however the PnL of PPO\_{narrow} is better even though gas costs were paid for its positions. As volatility and gas costs increase, profitability declines across all strategies, and the mean-PnL advantage of the RL agents becomes less pronounced. \par
CDM outperforms (with significant $p$ values) in PnL both risk-neutral and risk averse agents under the medium volatility condition. However, the risk aversion case clearly demonstrates the benefits of the RL agents under these market conditions. At $\sigma=0.02$ and $g=4$, the risk-averse RL agents achieve substantially improved CVaR than CDM. Relative to their risk-neutral counterparts under the same market conditions, they sacrifice approximately 4 USDC 

\begin{figure}[t]
\centering
\begin{subfigure}{\columnwidth}
    \centering
    \includegraphics[width=0.86\columnwidth]{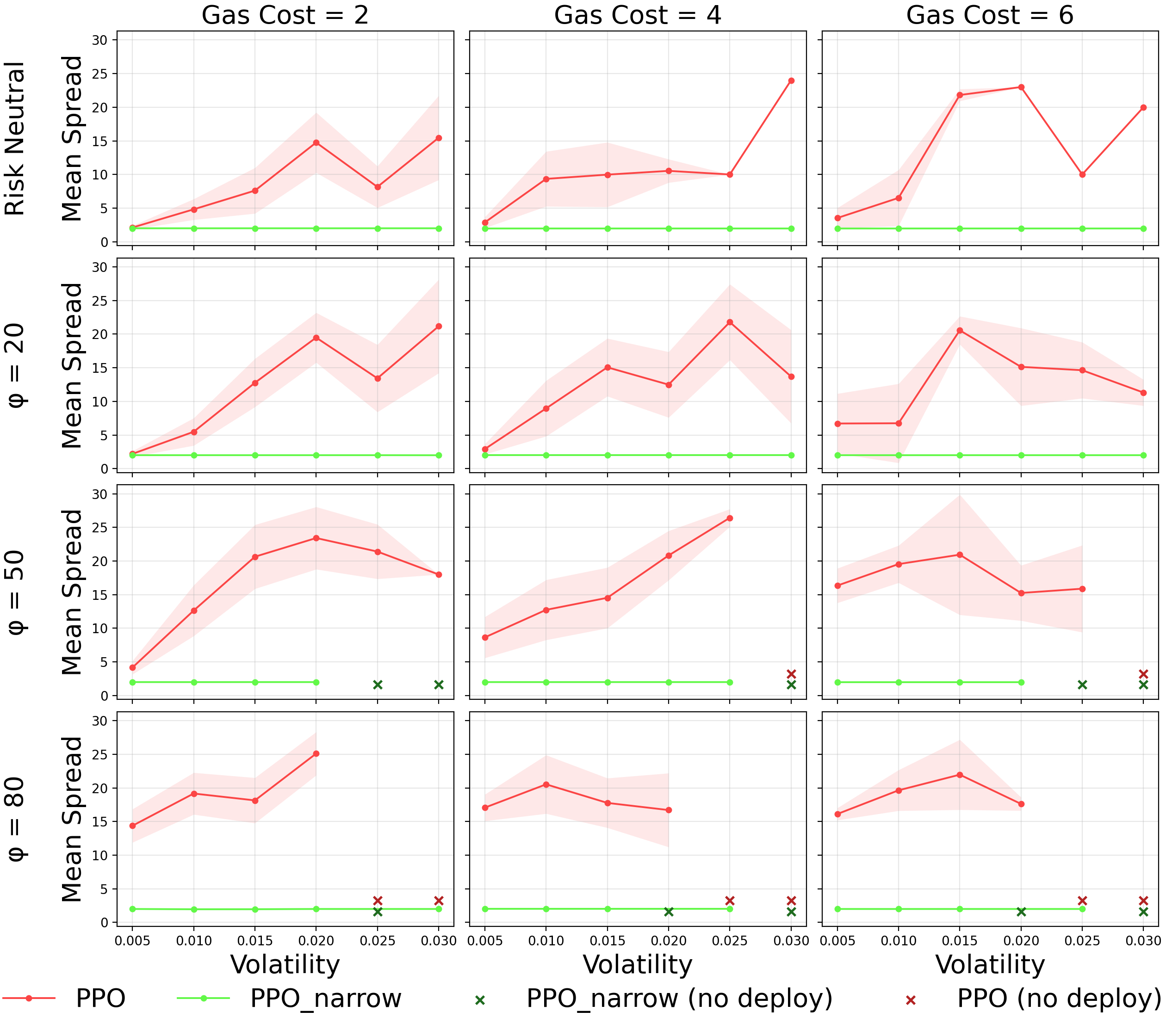}
    \caption{Mean Spreads in ticks of the two RL agents. The shaded area corresponds to the $\pm1$ standard deviation around the mean.} %PPO\_narrow's width is fixed at 2 ticks size.
    \label{fig:width_sub_a}
\end{subfigure} \vspace*{0.07cm}

\begin{subfigure}{\columnwidth}
    \centering
    \includegraphics[width=0.86\columnwidth]{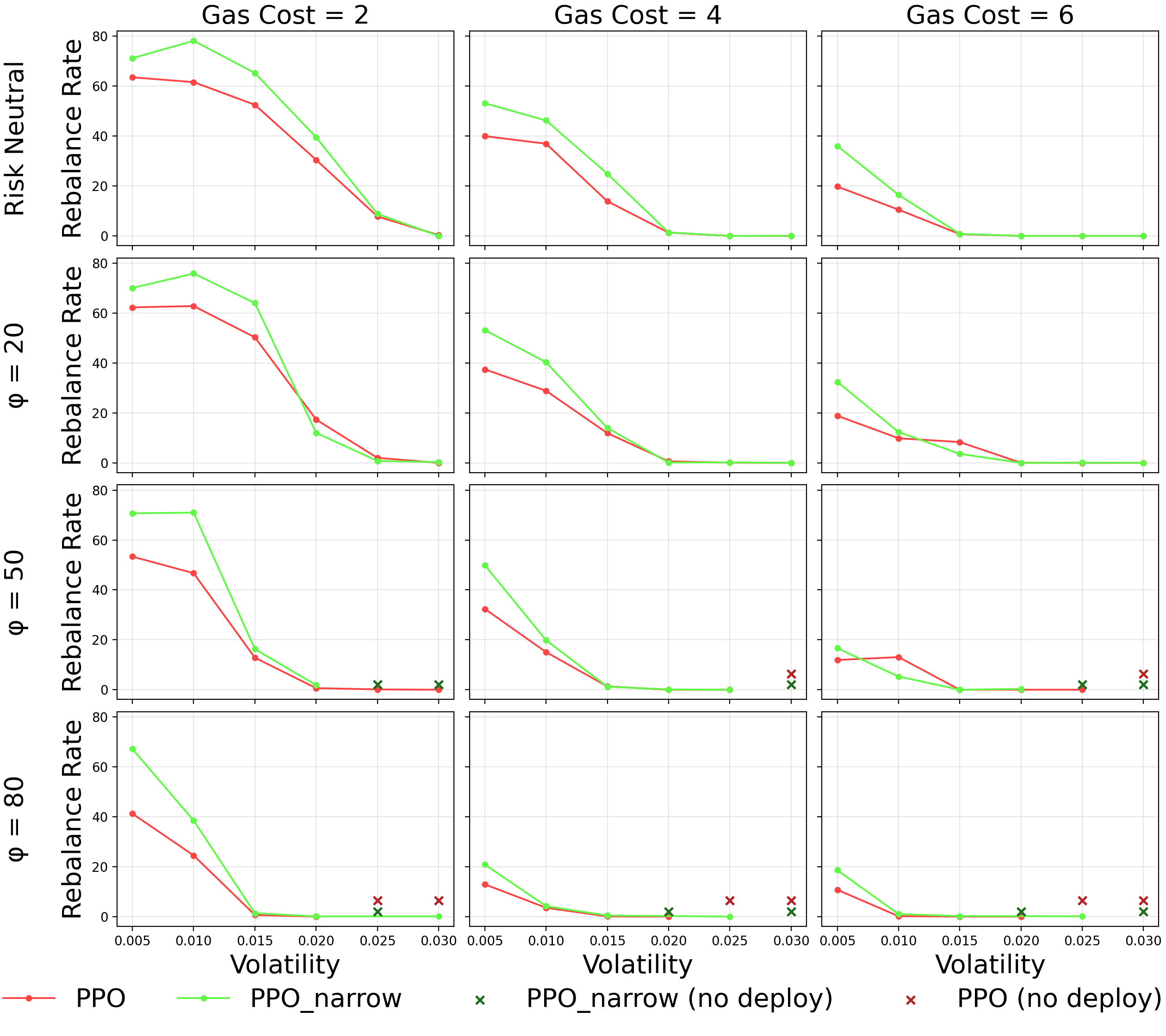}
    \caption{Rebalancing Rate (\%) of the RL agents.}
     \label{fig:rebal_sub_b}
\end{subfigure}
\caption{Position width and rebalancing behaviour of PPO and PPO\_narrow. Panel~(\subref{fig:width_sub_a}) shows mean quoted width, while Panel~(\subref{fig:rebal_sub_b}) shows per-decision rebalance rates. Labels and Legends as in \Cref{fig:collective_pnl_plots}.}
\label{fig:width_rebal}\vspace{-0.35cm}
\end{figure}

\noindent in mean PnL while improving CVaR by approximately 27 USDC, illustrating their ability to trade modest expected profitability for substantially stronger downside protection, reducing extreme losses as market conditions deteriorate. At $\sigma=0.03$ and $g=6$, the risk-averse agents learn mostly not to deploy liquidity, approximating a "DoNothing" benchmark that would report PnL$=0$ and CVaR$=0$. Thus, under sufficiently adverse conditions, the economically meaningful learned behaviour is endogenous market exit rather than superior active liquidity management. Finally, the relatively small standard deviations across independently trained policies indicate stable training outcomes. 

As discussed, PPO\_narrow generally achieves better performance than PPO because its fixed narrow range gives it the theoretically favourable concentration implied by \Cref{theorem:linearity}. However, by decomposing the PnL of the agents, we notice that this advantage comes with higher IL and gas costs (\Cref{fig:pnl_sub_b}), since narrow positions require more frequent rebalancing. Therefore, when risk \noindent aversion is taken into account; by learning wider positions, PPO has lower fee intensity but less active risk management. This mechanism is confirmed in Figure \ref{fig:width_rebal}. PPO systematically increases its average width as volatility, gas costs, and risk aversion rise, provided that

\begin{figure}[t]
\centering
\includegraphics[width=0.8\columnwidth]{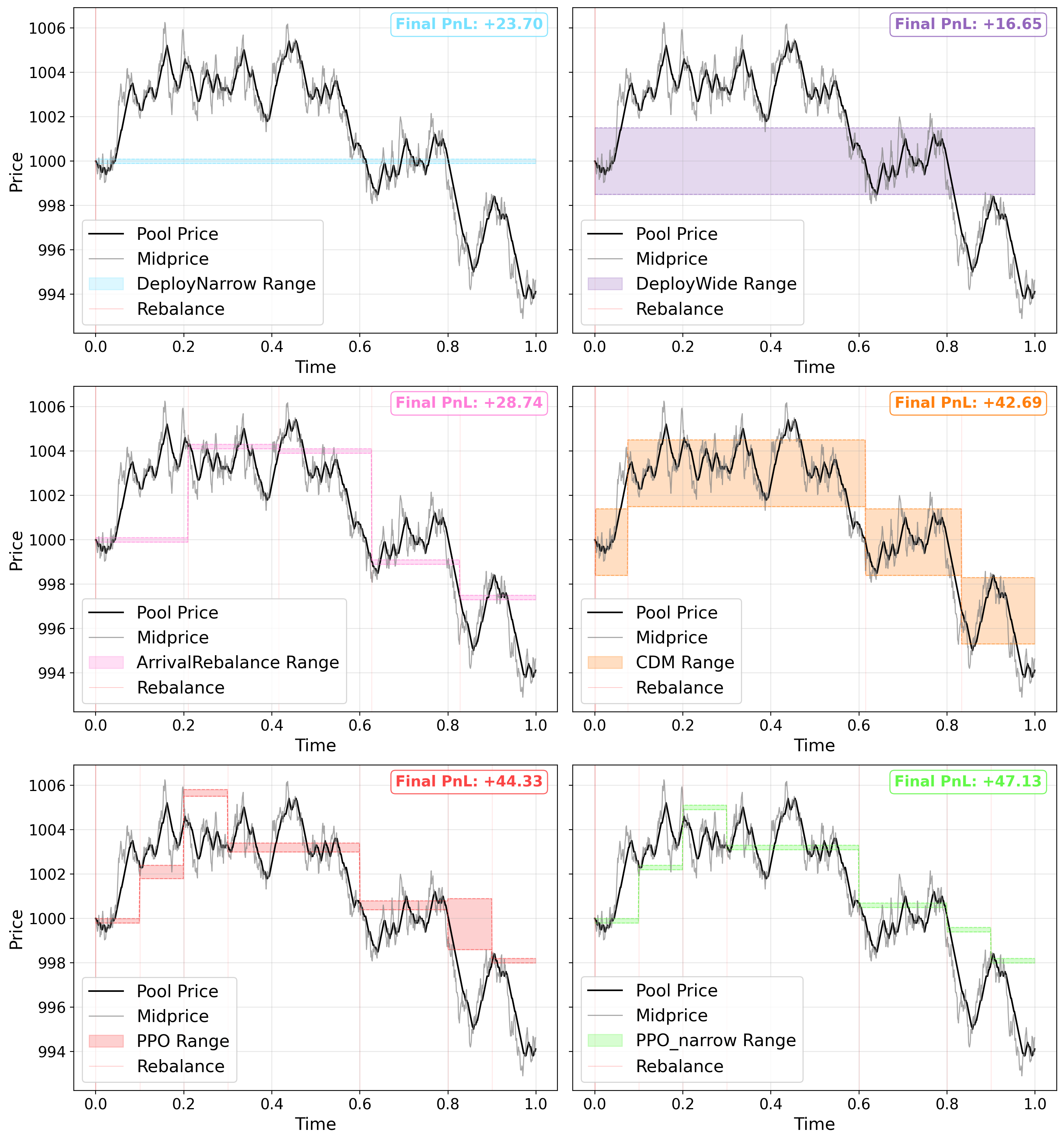}
\caption{Liquidity provision behaviour of all strategies for one random simulation from the evaluation trajectories (scenario: $\sigma=0.01, g=2,$ risk neutral agents). The red thin vertical lines show when agents rebalance.}
\label{fig:price_evolution}
\end{figure}

\begin{figure}[t]
\centering
\includegraphics[width=1\columnwidth]{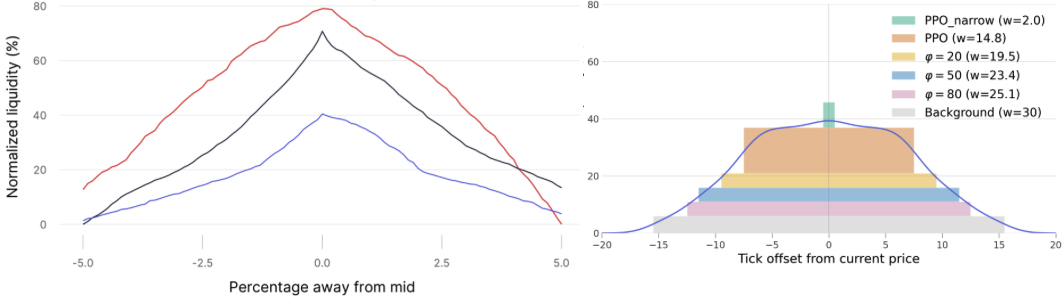}
\caption{Illustration of the empirical CL shapes around pool prices (left) in our model-based environment (right). Left figure shows the normalised liquidity distribution around the current tick for Optimism (red), Arbitrum (black) and Ethereum (purple) as observed from real on-chain data calculated in \cite{adams2024layer2layer2}. Right figure shows a superposition of all four PPO risk-profile agents, PPO\_narrow, and a wide background liquidity. Each agent's position is her reported average width for $\sigma=0.02$ and $g=2$ with indicative relative wealths.}
\label{fig:triangles}
\end{figure}

\noindent the average rebalancing rate is not near zero. Both agents reduce their rebalancing frequency across these dimensions. PPO\_narrow rebalances more often on average, reflecting the greater management required by fixed concentrated positions. Note that PPO\_narrow incurs lower IL and gas costs in the few instances where PPO had a higher rebalance rate, as shown in Figure \ref{fig:rebal_sub_b}. 
\Cref{fig:price_evolution} provides an example of how agents provide liquidity, for high level intuition. \par 
% Additionally, to provide some intuition on how agents react to market dynamics, we illustrate trajectories sampled from the evaluation period in \Cref{fig:price_evolution}. \par

\textbf{A Qualitative Illustration of Aggregate Liquidity Shapes:} The results of \Cref{fig:width_sub_a} show that the learned width of the PPO agent varies systematically with the LP’s risk profile and the frictions of the environment. This provides a natural qualitative interpretation of the liquidity concentration observed empirically around the pool prices in real concentrated AMM pools as the aggregate result of heterogeneous LPs with different risk appetites.

%Normalised Liquidity (\%) shows the existing liquidity in a certain tick range as a percentage of the total liquidity in the pool. 

\begin{comment}
\begin{figure}[t]
\centering
\begin{subfigure}{\columnwidth}
    \centering
    \includegraphics[width=1\columnwidth]{figures/triang_merged.png}
    \caption{Normalised liquidity distribution around current tick for Optimism (red), Arbitrum (black) and Ethereum (purple) as observed from real on-chain data. The distribution was calculated around the current pool price every 15 minutes from 03-01-2023 to 01-01-2024. Image is from \cite{adams2024layer2layer2}.}
    \label{fig:real_triangle}
\end{subfigure}
\begin{subfigure}{\columnwidth}
    \centering
    \includegraphics[width=0.75\columnwidth]{figures/10_triangular_liquidity_2.png}
    \caption{A superposition of all four PPO risk-profile agents, the narrow PPO variant, and a wide passive “background” liquidity. Each agent's position is her reported average width for $\sigma=0.02$ and $gas\_cost=2$ with varying relative wealths. Normalised Liquidity (\%) shows the existing liquidity in a certain tick range as a percentage of the total liquidity in the pool.}
    \label{fig:sim_triangle}
\end{subfigure}
\caption{Conceptual replication of empirical CL shapes around pool prices in our model-based environment.}
\label{fig:triangles}
\end{figure}
\end{comment}

\begin{figure*}[t]
\centering
\begin{subfigure}[t]{0.3\textwidth}
    \centering
    \includegraphics[width=\linewidth]{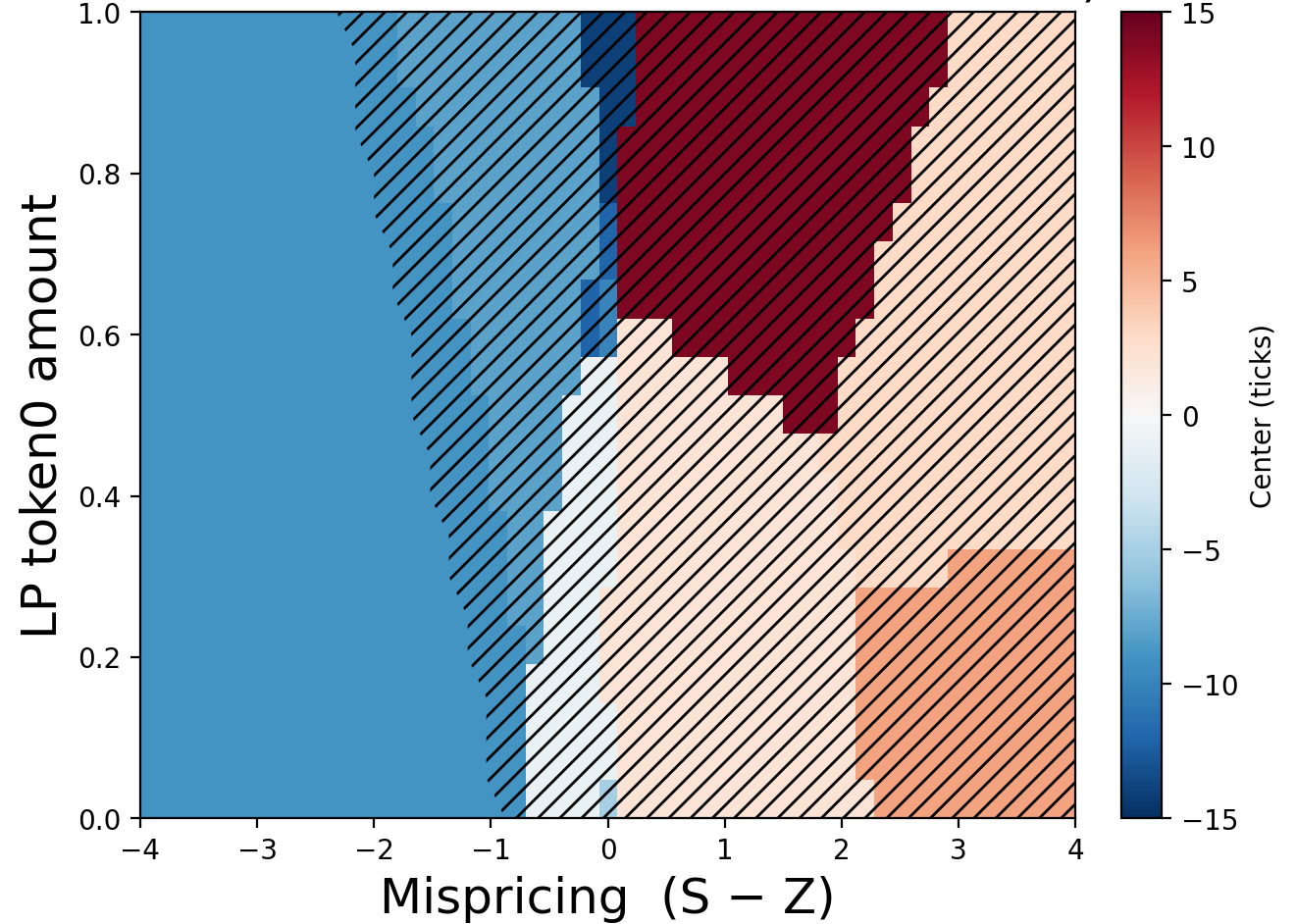}
    \caption{The agent repositions liquidity closer to the external price due to the expected arbitrage flow.}
    \label{subfig:center_ppo_ra}
\end{subfigure}
\hfill
\begin{subfigure}[t]{0.3\textwidth}
    \centering
    \includegraphics[width=\linewidth]{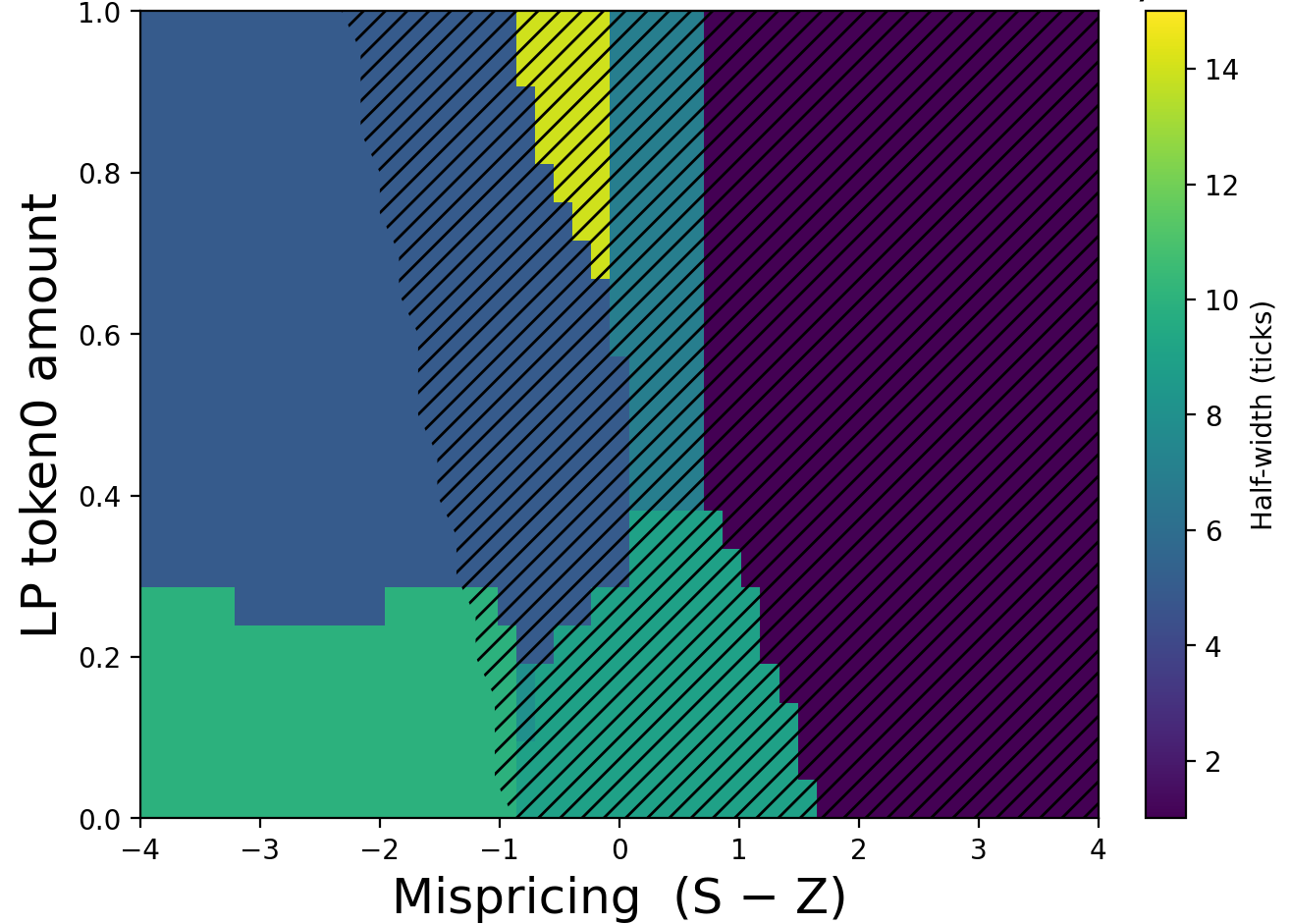}
    \caption{The agent's size of the chosen width depends on the amount of risky token held, functioning as a risk-management device.}
    \label{subfig:width_ppo_ra}
\end{subfigure}
\hfill
\begin{subfigure}[t]{0.3\textwidth}
    \centering
    \includegraphics[width=\linewidth]{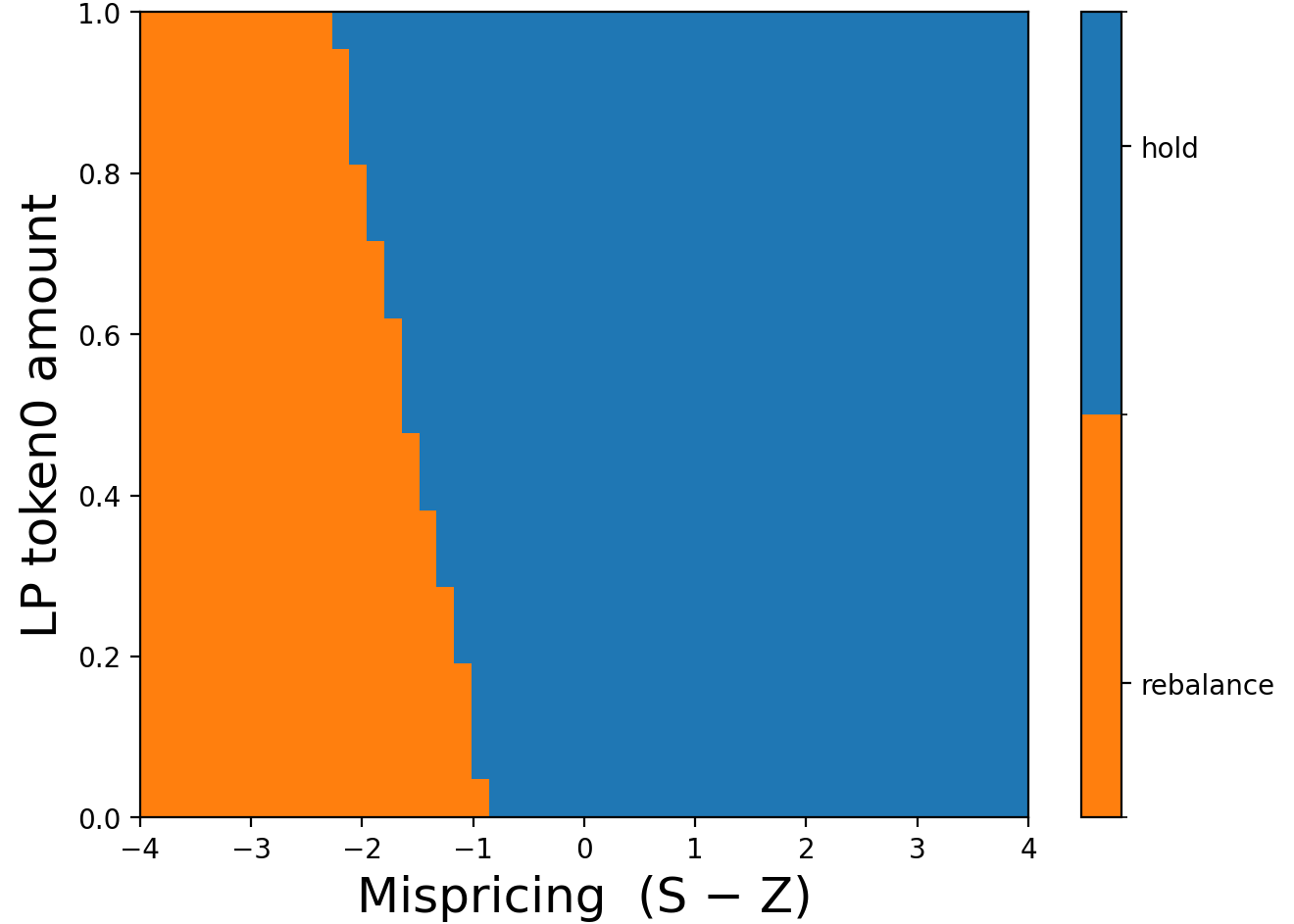}
    \caption{For $S-Z>0$ the agent chooses to hold her position to get rid of the risky asset from the expected incoming trades.}
    \label{subfig:hold_ppo_ra}
\end{subfigure}\vspace{-0.1cm}
\caption{The full action response that a risk-averse PPO agent has for fixed wealth, with $\phi=50$ for $gas\_cost=4$ and $\sigma=0.01$ at $t=0.4$, across two important dimensions: the amount of the risky asset she holds (e.g. ETH), and the adversarial signal (mispricing). It is assumed that the current action of the agent is a symmetrical narrow position (-1,1) around the current tick. The shaded areas in the first two plots are where the agent holds her existing position and the chosen actions can be ignored.}
\label{fig:ppo_ra_actions}\vspace{-0.15cm}
\end{figure*}
\begin{figure}[t]
\centering
\includegraphics[width=\columnwidth]{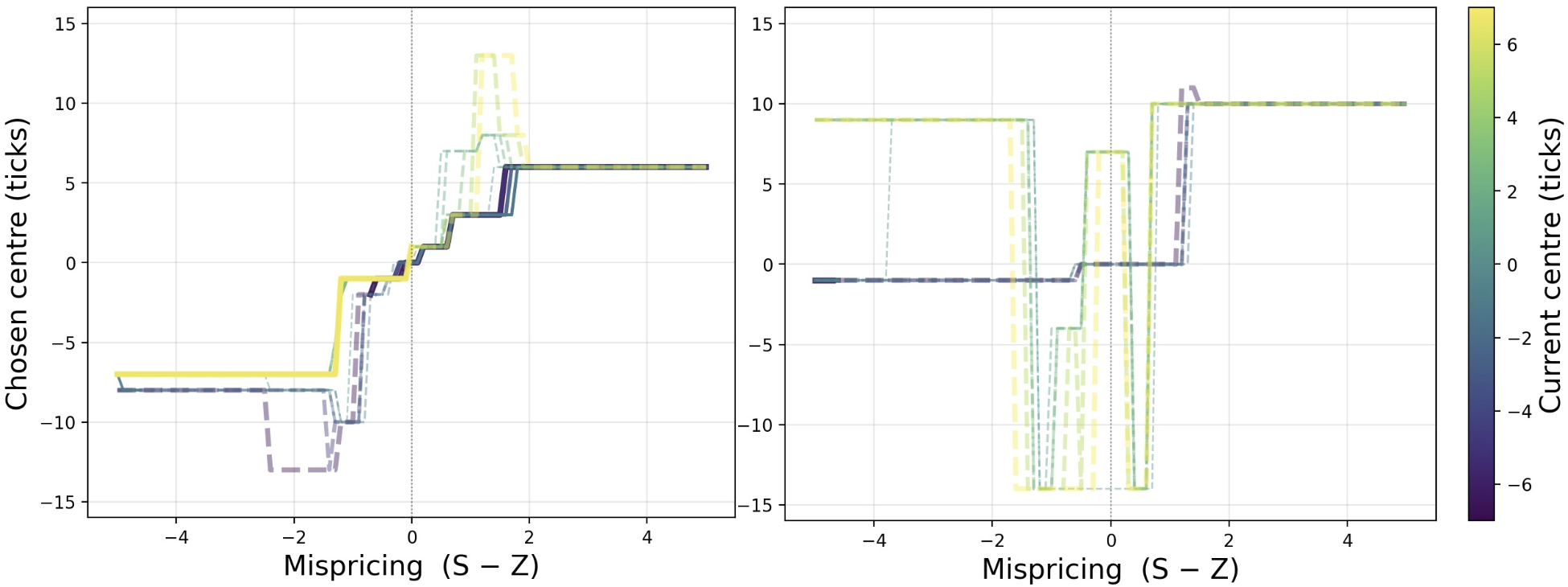}
\label{fig:gas2_actions}\vspace{-0.2cm}
\caption{Action responses of the risk-neutral PPO\_narrow agent for $\sigma=0.01$ with gas cost $g=2$ (left panel) and $g=6$ (right panel) at time $t=0.4$. The x-axis shows the mispricing (USDC), while the y-axis shows the policy output, i.e. the chosen range centre. The colour bar indicates the current centre (ticks). Both the chosen action and current state are expressed as centre offsets. Dark purple denotes a position well below the current tick, bright yellow well above it, and teal positions partially in range. Solid line segments indicate rebalancing; dashed segments indicate that the agent ignores its centre output and holds its position.}
\label{fig:ppo_narrow_actions}
\end{figure}

\noindent Figure \ref{fig:triangles} illustrates this mechanism by superimposing the average positions of PPO agents trained with different risk profiles and a wide passive existing background liquidity component. The resulting aggregate allocation forms a bell-shaped liquidity profile around the current pool price, providing a model-based illustration of the real
\noindent liquidity concentration patterns observed in DeFi pools.
Flatter empirical distributions, such as the one observed on Ethereum, can therefore be associated with stronger risk aversion, higher gas costs or heterogeneous individual LP objectives. Analogously, the liquidity distribution of Optimism and Arbitrum can be thought qualitatively as the superposition of LPs that are more risk seeking in their approach of capturing the fee share of incoming trades by narrowing their quotes, resembling our theory informed RL agent. In this sense, the shape of liquidity in concentrated AMMs reflects both the local fee incentives of CL and the idiosyncratic risk-management behaviours of the LP population. 

\textbf{Profiling Learned Policies as State-Dependent Functions:} \Cref{fig:ppo_ra_actions} and \ref{fig:ppo_narrow_actions} examine the policies as state functions to explain how the agents condition their actions on the observed state. Figure \ref{fig:ppo_narrow_actions} shows the response of the optimal risk-neutral PPO\_narrow agent. Since this agent has a fixed two-tick width, the relevant decisions are where to centre the position and whether or not to rebalance. In the low gas-cost case (left subplot), the agent rebalances positions that are far out of range and places liquidity in the direction of expected arbitrage flow. Specifically, when mispricing is relatively small (-1,1) the positions that are deeply out of range (bold yellow and purple) get rebalanced near the current centre, adjusted to the direction of the arbitrage. However, when the current position is already favourably placed, the agent often chooses to hold, since rebalancing would require paying an additional gas cost. When mispricing is large, the out of range positions get rebalanced and their liquidity is placed again in the direction of the arbitrage however, with heavier asymmetry towards the external price to achieve better rates while collecting the fees. The positions that are already in the direction of the arbitrage (e.g. purple lines with $S-Z<0$ and, conversely, yellow lines with $S-Z>0$) are not rebalanced. The existing placement is expected to yield higher fees collected than rebalancing closer to $S$, due to incurred gas costs.

The high gas-cost case (right subplot) demonstrates that this behaviour changes sharply when intervention becomes expensive. All lines are dashed, which means that the agent holds her current position (even with large mispricing signal) and the chosen centre output can be ignored. Thus, high gas costs not only reduce PnL mechanically but also expand the inaction region. The cost of rebalancing to correct her position will be larger than the expected profit, in contrast to the left subplot, where the gas cost was lower.

Figure \ref{fig:ppo_ra_actions} reports the corresponding state-function analysis for a risk-averse PPO agent. PPO controls with her actions $Z^\ell_n,Z^u_n$ essentially both where (i.e. the centre defined as $Z^c_n:=\frac{Z^\ell_n+Z^u_n}{2}$) and how wide (i.e. a symmetrical half-width $\frac{|Z^u_n-Z^\ell_n|}{2}$ around $Z^c$) to place her liquidity. The figure shows how the policy responds jointly to mispricing and inventory holdings. Specifically, \Cref{subfig:center_ppo_ra} shows that when $S-Z<0$, the agent repositions the centre of her liquidity towards the external price, independently of the amount of ETH held. In doing so, she collects fees from the arbitrage flow, while avoiding exchanging her liquidity in the worst possible rate (the current one) due to the expected incoming arbitrage trades. However, as shown in \Cref{subfig:width_ppo_ra} with $S-Z<0$, the chosen width varies with the amount of ETH held. For low ETH (i.e. more USDC) holdings, the agent widens her position reflecting her risk management. She dilutes USDC capital across the arbitrage direction, reducing exposure to unfavourable prices. For higher ETH holdings (less USDC), she identifies the profitable rates, resulting in a more concentrated but still wide position. Lastly, \Cref{subfig:hold_ppo_ra} shows that for $S-Z>0$ the agent chooses to hold her asset at (-1,1) around the current tick and wait for trades towards the direction of the arbitrage. As a risk-averse agent, she prefers to take the trade because she will exchange the risky token for the safe token (numéraire). 

Overall, Figures \ref{fig:ppo_ra_actions} and \ref{fig:ppo_narrow_actions} show that the learned policies are economically interpretable: they use mispricing to infer order-flow direction, gas costs to decide whether intervention is worthwhile, and inventory holdings to control risk.

\section{Limitations}

Our analysis is deliberately scoped to a controlled, model-based setting. Each RL policy is evaluated in the same stationary economic regime in which it is trained, allowing us to isolate and interpret learned economic mechanisms without confounding them with regime-detection or adaptation errors. The results should therefore be understood as conditional on correctly identified, approximately stationary transition dynamics. The benchmarks are constrained by the concentrated-liquidity literature. We compare the RL agents with tuned static and rebalancing heuristics and with the principal analytical strategy available for CPM liquidity provision. We do not claim universal superiority of RL; rather, we show that it can produce interpretable, state-dependent policies where closed-form solutions are generally unavailable. 
These policies may serve as reproducible model-based references for future work. 
The continuous-time impulse-control problem is implemented with discrete decision opportunities, reflecting the discrete execution of on-chain transactions. The reported policies, therefore, solve the discretized problem induced by the chosen decision frequency, not the exact continuous-time formulation. Finally, the superposition experiment shows that heterogeneous risk preferences can resemble liquidity patterns like those in concentrated-liquidity pools, without identifying them as a unique explanation. An empirical approach would require on-chain data analysis, which is beyond the scope of this work.

\section{Conclusion}
\label{sec:conclusion}
%%%%%%%%%%%%%%%%%%%%%%%%%%%%%%%%%%%%%%%%%%%%%%%%%%%%%%

We studied dynamic liquidity provision in AMMs with CL through an RL perspective. We formulated the LP’s problem as a stochastic impulse control problem, where the LP jointly decides when to rebalance and where to allocate her capital under price uncertainty, arbitrage trades, and operational frictions while managing inventory risk. Due to the analytical intractability of the problem we employed RL algorithms to obtain interpretable state-dependent policies. The results show that learned policies recover economically meaningful behaviour with complex state-dependent strategies. Higher volatility and gas costs reduce average PnL, while stronger risk aversion compresses the left tail of the PnL distribution by inducing wider positions and lower rebalancing rates. Overall, the learned policies adapt their position centre, width, and intervention decision to the state of the market. Lastly, by superimposing learned policies with heterogeneous risk profiles, our modelling provides one possible mechanism capable of generating qualitatively similar concentration patterns with bell-shaped liquidity distributions around the pool prices that are observed in real DeFi exchanges. Future work could extend to multi-agent competition among heterogeneous LPs \cite{baggiani2026competition}, and AMM designs with dynamic trading fees \cite{baggiani2025optimal}.

%%
%% The next two lines define the bibliography style to be used, and
%% the bibliography file.
\bibliographystyle{ACM-Reference-Format}
\bibliography{bibliography}
\end{document}